\documentclass[conference]{IEEEtran}
\usepackage{caption}
\usepackage{amsthm,amssymb,graphicx,graphicx,multirow,amsmath,color,amsfonts}
\usepackage{tikz}
\usepackage{pgfplots}
\usetikzlibrary{shapes,shadows,arrows}

\def\E{\mathbb{E}}

\def\P{\mathbb{P}}

\def\i{\mathbf{1}}


\def\d{\mathrm{d}}

\newtheorem{lemma}{Lemma}{}

\author{
\IEEEauthorblockN{Radha Krishna Ganti, Andrew Thangaraj and Arijit Mondal\\}
\IEEEauthorblockA{Department of Electrical Engineering\\
Indian Institute of Technology, Madras\\
Chennai, India 600036\\
\{rganti, andrew, arijit.mondal\}@ee.iitm.ac.in}
}

\def\diag{\operatorname{diag}}
\def\trace{\operatorname{tr}}

\def\ent{H}
\def\EN{\mathcal{E}}
\def\cQ{\mathcal{Q}}

\title{Approximation of Capacity for ISI Channels with One-bit Output Quantization}
\date{\today}
\begin{document}
\maketitle
\begin{abstract}
Motivated by recent high bandwidth communication systems, Inter-Symbol
Interference (ISI) channels with 1-bit quantized
output are considered under an average-power-constrained continuous
input. While the exact capacity is difficult to characterize, an
approximation that matches with the exact channel
output up to a probability of error is provided. The
approximation does not have additive noise, but constrains the
channel output (without noise) to be above a threshold in absolute
value. The capacity under the approximation is computed using methods involving standard
Gibbs distributions. Markovian achievable schemes approaching the
approximate capacity are provided. The methods used over the
approximate ISI channel result in ideas for practical coding schemes
for ISI channels with 1-bit output quantization.
\end{abstract}
\section{Introduction} 
Channels with Inter-Symbol Interference (ISI) and Additive White
Gaussian Noise (AWGN) are often encountered in practice. Depending on
the application, an average-input power constraint or a finite input
alphabet constraint is commonly studied. Recently, in applications
such as millimeter wave \cite{SunHeath14}\cite{AlkhateebMoHeath14} or optical
or intra-chip \cite{Harwood07} communications, quantization of the
output of an ISI channel has been considered because of limitations in
Analog-to-Digital conversion at high speeds. In some cases, the output
quantization may be as low as a single bit. Since the transmitters in
some of these systems can be more complex and operate at high powers, the channel input may not
have severe quantization limits.

Motivated by the above applications, we consider a noisy ISI channel
with average-power constrained continuous input and 1-bit quantized
output. The available literature mostly considers either continuous
input/output alphabets or a finite input alphabet with a
continuous output alphabet
\cite{ShamaiLaroia96}\cite{SadeghiVontobel09}. The quantized output
case has been considered for the AWGN channel with no ISI
\cite{SinghDabeerMadhow09}, and the ISI case has been briefly
addressed recently \cite{MoHeath14}.

The exact capacity of an ISI channel with 1-bit quantized output
appears to be difficult to characterize explicitly. In this work, we introduce an approximation to the ISI channel model with
1-bit quantized output. The
approximation does not have additive noise, but constrains the
channel output (without noise) to be above a threshold in absolute
value. Because of the thresholding, the
approximate channel output, after quantization, matches the actual channel output up to a
probability of error that can be controlled by the threshold. 

The main
advantage of avoiding noise in the approximation is that the exact
capacity can be computed for the approximate model. We show how such a
computation can be carried out using Gibbs
distributions. In addition, we exhibit achievable schemes with Markov input
approaching the approximate capacity in some numerical examples. Since the approximation is valid up to a probability of error, a coding scheme used over the approximate channel model can be
coupled with a standard error control code to derive a practical
coding scheme for the exact channel. 
  
The rest of the paper is organized as follows. Section
\ref{sec:system-model} describes the ISI channel model and its approximation. In Section
\ref{sec:capacity-channel}, we provide a method for computing the
approximate capacity under an average power constraint on the input, and elaborate on achievable
schemes in Section \ref{sec:achievable-schemes}. Numerical results are
given in Section \ref{sec:numerical-examples}, and concluding remarks
are made in Section \ref{sec:concluding-remarks}.

\section{System Model}
\label{sec:system-model}
We consider a discrete-time finite-tap ISI channel with average-power-constrained continuous input and one-bit quantized output as depicted in Fig. \ref{fig:isimodel}.
\begin{figure}[htb]
  \centering
\begin{tikzpicture}
\node (in) {};
\node[draw,right of=in,xshift=1cm,align=center] (ch) {Channel\\$h$}
edge[<-] node[above]{$X_n$} (in);
\node[draw,circle,right of=ch,xshift=1cm] (plus) {+}
edge[<-] node[above]{$Y_n$} (ch);
\node[above of=plus] (awgn) {$Z_n$}
edge[->] (plus);
\node[draw,right of=plus,xshift=1cm] (Q) {$Q(\cdot)$}
edge[<-] node[above]{$R_n$} (plus);
\node[right of=Q,xshift=1cm] (out) {}
edge[<-] node[above]{$Q(R_n)$} (Q);
\end{tikzpicture}  
  \caption{ISI channel with quantized output.}
  \label{fig:isimodel}
\end{figure}
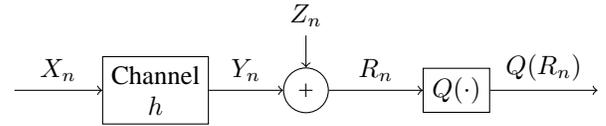
The input to the channel is denoted $X=\{X_n, 0\le n\le N-1\}$, and the channel impulse response of length $L$ is denoted $h=\{h_n, 0\leq n\leq L-1\}$. The convolution of the input with the channel impulse response is denoted $Y=\{Y_n, 0\le n\le N-1\}$, and is given by
\begin{equation}
  \label{eq:conv}
  Y_n=\sum_{k=0}^{L-1}h_kX_{n-k}.
  \end{equation}
The channel $h$ is assumed to be constant, and all signals are assumed to be zero outside the specified ranges. Independent and identically distributed zero-mean Gaussian noise of variance $\sigma^2$, denoted $Z_n$, is added to obtain an intermediate signal $R_n=Y_n+Z_n$. The signal $R_n$ is quantized by a 1-bit quantizer $Q(\cdot)$ to obtain the channel output $Q(R)=\{Q(R_n), 0\le n\le N+L-2\}$. The quantizer is defined as follows:
\begin{equation}
  \label{eq:quantizer}
  Q(x)=\begin{cases}
+1,\text{ if }x\ge0,\\
-1,\text{ if }x<0.
\end{cases}
\end{equation}
The average power of the input is constrained to be at most $P$. That is, we require
\begin{equation}
  \label{eq:powerconstraint}
  \E[\|X\|^2]=\sum_{n=0}^{N-1}E[|X_n|^2]\le NP.
\end{equation}
In this work, the overall goal is to approximate the mutual
information rate $\frac{1}{N}I(X;Q(R))$ and provide computable
expressions or bounds. 

The approximate ISI channel model is depicted in Fig. \ref{fig:approxisi}.
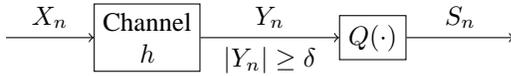
\begin{figure}[htb]
  \centering
\begin{tikzpicture}
\node (in) {};
\node[draw,right of=in,xshift=1cm,align=center] (ch) {Channel\\$h$}
edge[<-] node[above]{$X_n$} (in);
\node[draw,right of=ch,xshift=2cm] (Q) {$Q(\cdot)$}
edge[<-] node[above]{$Y_n$} node[below] {$|Y_n|\ge\delta$} (ch);
\node[right of=Q,xshift=1cm] (out) {}
edge[<-] node[above]{$S_n$} (Q);
\end{tikzpicture}    
  \caption{Approximate ISI channel with quantized output.}
  \label{fig:approxisi}
\end{figure}
In the approximate ISI channel, there is no noise. However, the convolution output $Y_n$ is constrained to be greater than $\delta$ in absolute value, and this provides justification for ignoring noise. We readily see that, under the constraint $|Y_n|\ge\delta$, the output of the actual model $Q(R_n)$ is approximated by $S_n=Q(Y_n)$ up to a probability of error lesser than or equal to $\cQ(\delta/\sigma)$, where $\cQ(x)=\int_x^{\infty}\frac{1}{\sqrt{2\pi}}e^{-u^2/2}du$ is the standard $Q$-function. Therefore, coding schemes developed for the approximate model can be used in the actual model with additional error control coding for the approximation error $\cQ(\delta/\sigma)$. While the additional coding results in a loss in information rate, we find that the approximation, because of the removal of noise, is useful on the following two counts:
\begin{enumerate}
\item The capacity or maximum information rate $\frac{1}{N}I(X;S)$, where $S=\{S_n,0\le n\le N-1\}$, under the power
  constraint of (\ref{eq:powerconstraint}) and the constraint
  $|Y_n|\ge\delta$, has computable expressions and bounds. This
  provides useful estimates of the capacity of output-quantized ISI channels.
\item The techniques used for computing the approximate capacity
  provide useful intuition on coding and signaling methods for output-quantized ISI
  channels. 
\end{enumerate}

\noindent{\em Convolution}: We will use matrix notation to denote the convolution
in \eqref{eq:conv} as $Y=\tilde{M}_hX$, where the entries of the
$N\times N$ matrix $\tilde{M}_h$ are either 0 or one of the channel
taps $h_k$ and $X$, $Y$ are column vectors. The matrix $\tilde{M}_h$ is not circulant. Let $M_h$
denote the $N\times N$ circulant matrix with first column equal to $h$. When
$N$ becomes large (as is in our case), with $L$ fixed,  $\tilde{M}_h$  behaves
like the circulant matrix $M_h$ in the sense that $\lim_{N \to \infty} \|
M_h-\tilde{M}_h\| =0$ where $\|. \|$ is a matrix norm. In this
paper\footnote{For all $N$, the
  results hold with a suitable cyclic prefix, for
  instance.}, for simplicity, we will always
assume the circular convolution $Y= M_hX$ in the channel model. 

\section{Approximate Capacity}
\label{sec:capacity-channel}
The capacity of the approximate ISI channel is given by
\begin{eqnarray*}
C_{\delta}(P)=\lim_{N \to \infty} \sup_{\stackrel{\E[\|X\|^2]\le NP}{|Y_n|\ge\delta}}\frac{I(X,S)}{N}=\frac{\ent(S)}{N},  
\end{eqnarray*}
where the last equality is because the sequence $S$ is  a deterministic function of the input $X$
in the absence of noise in the approximate channel. 
Since the output alphabet of the quantizer is either $1$ or $-1$, it is easy to observe that $C_{\delta}(P)\leq 1$.  
 
 
\subsection{Power constraint}
We begin by bounding the power of the input sequence $X$ required for a given output sequence $S$. 
Given the output symbol sequence $S=s\in\{-1,1\}^N$, we have the constraint that $|Y_n|=s_n
Y_n \geq \delta$.  So, the minimum energy, denoted $\EN(s)$, required for a given
quantized sequence $s$ is given by the following optimization problem:
\begin{align}
\EN(s)=\min_{\diag(s)M_hx \succeq \delta\i} \sum_{n=0}^{N-1}|x_n|^2,
\label{eq:conv_ineq}
\end{align}
where $\diag(s)$ is an $N\times N$ diagonal matrix with $s$ on the
main diagonal and $\i$ denotes the all-1 column. The inequalities $\diag(s)M_hx \succeq \delta\i$ are
linear and the feasible space for $x$ is the intersection of hyperplanes and,
hence, convex.  So, the above optimization problem is a convex
optimization problem in $N$ variables that essentially finds
the closest point from the origin to the convex set
$\{\diag(s)M_hx \succeq \delta\i\}$.
\label{sec:ent}
 We have 
\begin{eqnarray*}
\E [\|X\|^2 ] &=& \sum_{s  \in \{-1,1\}^N}\P(S=s)\E[\|X\|^2 \ | \ S=s],\\
&\ge&\sum_{s  \in \{-1,1\}^N}\P(S=s)\EN(s)
\end{eqnarray*}
using \eqref{eq:conv_ineq}. Because of the average power constraint on
the input $X$, we have 
\begin{align}
\sum_{s  \in \{-1,1\}^N }\P(S=s)\EN(s) \leq NP.
\label{eq:cons}
\end{align}
\subsection{Entropy maximization and Gibbs distribution} 
\label{sec:gibbs}
Let $\EN_{\min}=\min_s\EN(s)$, $\EN_{\max}=\max_s\EN(s)$,
$\overline{\EN}=\frac{1}{2^N}\sum_{s}\EN(s)$. Since the
constraints \eqref{eq:cons} are linear on the probabilities, it is
well-known \cite{jaynes1957information} that the Gibbs distribution
maximizes the entropy $\ent(S)$ for
$\EN_{\min}\le NP\le \EN_{\max}$.
The optimal distribution is the Gibbs distribution given by
\begin{align}
\P(S=s)= \frac{e^{-\frac{\beta  \EN(s)}{N} }}{Z },  s  \in \{-1,1\}^N,
\label{eq:gibbs}
\end{align}
where $Z $ is the normalizing constant, and $\beta$ is the unique
value for which \eqref{eq:cons} is met with equality.  
 The maximum entropy is given by
\begin{align}
\ent(S) &= \beta   P +  \ln(Z )\\
                           &= \beta   P  - \ln\left(\sum_{s  \in \{-1,1\}^N} e^{-\frac{\beta  \EN(s)}{N} }  \right).
                           \label{eq:gibbs}
\end{align}
For $NP< \EN_{\min}$, there exists no probability distribution that
satisfies \eqref{eq:cons}. It is also known
\cite{jaynes1957information}\cite{Conrad_probabilitydistributions}
that $\beta=0$ when $NP=\overline{\EN}$ and we observe that the
corresponding Gibbs distribution is the uniform distribution on
$\{-1,1\}^N$ and the maximum $\ent(S)=N$. For $P= \EN_{\min}$, we have $\beta=\infty$ and the
maximum entropy is given by $\log_2(|\mathcal{S}_m|)$, where
$|\mathcal{S}_m|$ denotes the number of sequences in $\{-1,1\}^N$ that
achieve the minimum energy. Hence, the maximum entropy when $P=
\EN_{\min}$ is $\ent(S)=\frac{\log_2(|\mathcal{S}_m|)}{N}$. 

In summary, we see that the capacity of the approximate ISI channel is
given by the Gibbs distribution whenever the power constraint is above
$\EN_{\min}/N$. For $P\ge\overline{\EN}/N$, we can achieve the maximum
possible capacity $C_{\delta}(P)=1$. For powers lower than $\EN_{\min}/N$, capacity goes to
zero. So, the interesting range of calculation if for $NP\in(\EN_{\min},\overline{\EN})$.
\subsection{Diagonally-dominant channels}
A matrix $A=(a_{ij})$ is said to be row-diagonally-dominant or simply
diagonally-dominant if $|a_{ii}|\ge\sum_{j,j\ne i}|a_{ij}|$. Let us call channels $h$ for which the matrix $(M_hM_h^T)^{-1}$ exists
and is diagonally-dominant as diagonally-dominant channels. For such
channels, the minimum energy values $\EN(s)$ can be characterized as follows.
\begin{lemma}
\label{lem:one}
When the matrix $ (M_h M_h^T)^{-1}$ is row-diagonally-dominant,
$\EN(s)$ for $s\in\{-1,1\}^N$ is achieved at $x^*$ that satisfies the equality constraints
\begin{align}
\diag(s)M_hx^* =\delta\i.
\label{eq:one}
\end{align}
\end{lemma}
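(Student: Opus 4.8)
The plan is to treat \eqref{eq:conv_ineq} as a convex quadratic program and to verify that the candidate point $x^*$ defined by the equalities \eqref{eq:one} satisfies the Karush--Kuhn--Tucker (KKT) conditions; since the objective $\|x\|^2$ is strictly convex and the constraints are linear, the KKT conditions are sufficient for global optimality, so exhibiting one feasible KKT point proves the claim. Writing $A=\diag(s)M_h$, the feasible set is $\{x: Ax\succeq\delta\i\}$ and the Lagrangian is $L(x,\lambda)=\|x\|^2-\lambda^T(Ax-\delta\i)$ with multiplier $\lambda\succeq 0$. First I would record the KKT system: stationarity $2x=A^T\lambda$, primal feasibility $Ax\succeq\delta\i$, dual feasibility $\lambda\succeq 0$, and complementary slackness $\lambda_i(Ax-\delta\i)_i=0$ for each $i$.

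Next I would impose the equality \eqref{eq:one}, namely $Ax^*=\delta\i$. Because each $s_i=\pm1$ we have $\diag(s)^{-1}=\diag(s)$, so this is equivalent to $M_hx^*=\delta s$, giving $x^*=\delta M_h^{-1}s$. The equalities make primal feasibility and complementary slackness hold automatically, so the only remaining thing to check is that stationarity can be met with a \emph{nonnegative} multiplier. Solving $2x^*=A^T\lambda=M_h^T\diag(s)\lambda$ for $\lambda$, and using $(M_h^T)^{-1}M_h^{-1}=(M_hM_h^T)^{-1}$, yields
\[
\lambda = 2\delta\,\diag(s)\,(M_hM_h^T)^{-1}s.
\]

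The crux of the argument, and the step where the diagonal-dominance hypothesis enters, is verifying $\lambda\succeq 0$. Writing $B=(M_hM_h^T)^{-1}=(b_{ij})$ and using $s_i^2=1$, the $i$th multiplier is
\[
\lambda_i = 2\delta\Bigl(b_{ii}+s_i\sum_{j\ne i}b_{ij}s_j\Bigr)\ge 2\delta\Bigl(b_{ii}-\sum_{j\ne i}|b_{ij}|\Bigr).
\]
Since $M_hM_h^T$ is positive definite, its inverse $B$ is positive definite and hence $b_{ii}>0$, so $b_{ii}=|b_{ii}|$; row-diagonal-dominance of $B$ then gives $b_{ii}\ge\sum_{j\ne i}|b_{ij}|$ and therefore $\lambda_i\ge 0$. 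With all four KKT conditions satisfied and the problem convex, $x^*$ is a global minimizer, so $\EN(s)$ is attained at a point obeying the equalities \eqref{eq:one}.

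The main obstacle is precisely this nonnegativity check. Without a structural hypothesis on $B$, the natural worry is that at the optimum some of the inequality constraints are strictly slack, in which case the corresponding multipliers vanish and \eqref{eq:one} fails; the bound above shows that row-diagonal-dominance of $(M_hM_h^T)^{-1}$ is exactly what forces every $\lambda_i$ to be nonnegative, so that the all-equalities point is certified as optimal. I would also remark that positive definiteness of $B$ is what lets us drop the absolute value on $b_{ii}$, making the dominance inequality directly usable.
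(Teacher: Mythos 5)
Your proposal is correct and follows essentially the same route as the paper: both identify the candidate $x^*=\delta M_h^{-1}s$, derive the multiplier $\lambda=2\delta\,\diag(s)(M_hM_h^T)^{-1}s$ from stationarity (the paper writes this as $2\delta\,\diag(s)(M_hM_h^T)^{-1}\diag(s)\mathbf{1}$, which is the same thing), and reduce everything to checking $\lambda\succeq 0$ via row-diagonal-dominance. Your write-up actually supplies the one step the paper waves off as ``easily observed''---the explicit bound $\lambda_i\ge 2\delta\bigl(b_{ii}-\sum_{j\ne i}|b_{ij}|\bigr)$ together with $b_{ii}>0$ from positive definiteness---so it is, if anything, more complete.
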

\begin{proof}
Since the constraint set is linear, the optimization problem
\eqref{eq:conv_ineq} is strongly dual. 
We now solve the optimization problem by forming its dual. The Lagrangian is given by
\begin{align*}
L(\lambda, x) = \|x\|^2+\lambda^T(\delta \mathbf{1} - \diag(x)M_h x).
\end{align*}
The gradient of the Lagrangian with respect to $x$ is given by $2x- (\diag(x) M_h)^T \lambda$, which gives
\[x^* = \frac{(\diag(x) M_h)^T \lambda}{2}.\]
Substituting $x^*$ in the Lagrangian, the dual problem is given by
\begin{align}
&\max_ {\lambda \succeq 0} \underbrace{\frac{-\lambda^T\diag(x)M_hM_h^T\diag(x)\lambda}{4} +\delta\lambda^T\mathbf{1}}_{g(\lambda)}.
\end{align}
We have
\begin{align}
\nabla g(\lambda) = \frac{-\diag(s)M_hM_h^T\diag(s)\lambda}{2} +\delta \mathbf{1}.
\label{eq:lag}
\end{align}
Setting the gradient to zero in \eqref{eq:lag}, we obtain the optimal $x^*$ as the solution to 
\[ {-\diag(s)M_h x^*} +\delta \mathbf{1}=0,\]
which is exactly  equivalent to \eqref{eq:one}. The only caveat is that the $\lambda$ obtained  from $g(\lambda)=0$  should be in the positive orthant.  Solving $g(\lambda)=0$,  we obtain 
\[\lambda^* =  2\delta   \diag(s) (M_h
M_h^T)^{-1}\diag(s)\mathbf{1}.\] 
We want $\lambda^*\succeq 0$ for every $s$. We can easily observe that this is indeed true if $ (M_h M_h^T)^{-1}$ is diagonally-dominant. 
\end{proof}
Hence, from Lemma \ref{lem:one}, when $ (M_h M_h^T)^{-1}$ is diagonally-dominant, $x^*=\delta M_h^{-1}s$. Hence, 
\begin{equation}
  \label{eq:1}
  \EN(s) =\|x^*\|^2= \delta^2s^T G s,
\end{equation}
where $G = (M_hM_h^T)^{-1}$. 

In the next lemma, the mean energy $\overline{\EN}$ is
characterized in terms of the channel matrix $M_h$ for
diagonally-dominant channels.
 \begin{lemma}
 \label{lem:mean}
The mean energy for diagonally-dominant channels is given by 
\begin{align*}
\overline{\EN}&=  \delta^2 \trace( M_h^{-T} M_h^{-1}).
\end{align*}
\end{lemma}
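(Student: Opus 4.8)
The plan is to reuse the closed-form expression \eqref{eq:1} for $\EN(s)$, which is available precisely because the channel is diagonally-dominant, and then average it over the hypercube $\{-1,1\}^N$ coordinate-by-coordinate. First I would write, directly from the definition of $\overline{\EN}$ and \eqref{eq:1},
\begin{align*}
\overline{\EN}=\frac{1}{2^N}\sum_{s\in\{-1,1\}^N}\EN(s)=\frac{\delta^2}{2^N}\sum_{s\in\{-1,1\}^N}s^T G s,
\end{align*}
where $G=(M_hM_h^T)^{-1}$. Since $G$ is symmetric, the quadratic form is unambiguous, and since the number of terms is finite I may freely interchange the summations in the next step.

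Next I would expand $s^T G s=\sum_{i,j}G_{ij}s_is_j$ and swap the order of summation, obtaining
\begin{align*}
\overline{\EN}=\delta^2\sum_{i,j}G_{ij}\left(\frac{1}{2^N}\sum_{s}s_is_j\right).
\end{align*}
The inner average is the crux of the argument: viewing the coordinates $s_i$ as independent symmetric $\pm1$ variables, $\frac{1}{2^N}\sum_s s_is_j$ equals $1$ when $i=j$ (because $s_i^2=1$ for every configuration) and equals $0$ when $i\ne j$ (because the two independent zero-mean factors make half the sign patterns cancel the other half). Thus the inner average is the Kronecker delta, the double sum collapses to the diagonal, and $\overline{\EN}=\delta^2\sum_i G_{ii}=\delta^2\trace(G)$. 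I would then rewrite this in the stated form using $(M_hM_h^T)^{-1}=M_h^{-T}M_h^{-1}$, giving $\overline{\EN}=\delta^2\trace(M_h^{-T}M_h^{-1})$.

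There is essentially no analytical obstacle here; the proof is a clean averaging computation. The only points requiring care are that the representation $\EN(s)=\delta^2 s^T G s$ holds for \emph{every} $s\in\{-1,1\}^N$, which is exactly the content of Lemma \ref{lem:one} under the diagonal-dominance hypothesis and may be invoked directly, and that the symmetry of $G$ is used implicitly. The key conceptual takeaway is that, in the mean energy, the off-diagonal entries of $G$ wash out because distinct $\pm1$ coordinates are uncorrelated, leaving only the trace.
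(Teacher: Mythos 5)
Your proposal is correct and follows essentially the same route as the paper's proof: both start from $\EN(s)=\delta^2 s^T G s$ (valid for every $s$ by Lemma \ref{lem:one}), average over $\{-1,1\}^N$, and observe that the off-diagonal terms vanish because $\frac{1}{2^N}\sum_s s_i s_j=0$ for $i\ne j$ while the diagonal terms give the trace. No gaps.
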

\begin{proof}
The energy $\EN(s)$ in \eqref{eq:1} can be expanded as
\begin{align}
\EN(S) =\delta^2\sum_{i=1}^N G_{ii} +\sum_{i,j,i\ne j} G_{ij}s_is_j.
\end{align}
Hence,
\begin{align*}
 &\frac{1}{2^N}\sum_{s\in\{-1,1\}^N}\EN(s)=\\
 &\frac{\delta^2}{2^N}\sum_{s\in\{-1,1\}^N} \sum_{i=1}^N G_{ii}
 +\underbrace{ \sum_{i\ne j} G_{ij} \frac{1}{2^N}\sum_{s\in\{-1,1\}^N}s_i s_j}_{=0}
\end{align*}
The second term in the above sum is zero since the summation spans over all the sequences on $\{1, -1\}^N$. 
Hence
\begin{align*}
 \overline{\EN}=  \delta^2\sum_{i=1}^N G_{ii}\stackrel{(a)}{=}  \delta^2\trace( M_h^{-T} M_h^{-1}),
\end{align*}
where $(a)$ follows from the definition of the matrix $G$.
\end{proof}
We now characterize $\overline{P}_h\triangleq\lim_{N\to\infty}\overline{\EN}/N$,
which is the minimum average power needed for capacity of 1 bit, in terms of the
Fourier transform of the channel $h$. Let the discrete-Fourier
transform of the channel be
\begin{equation}
  \label{eq:2}
  f(\lambda) = \sum_{k=0}^{L-1}h_k e^{jk\lambda}.
\end{equation}
Since $M_h$ is a circulant matrix,  it is easy to see that
 \begin{align}
\overline{\EN}=\delta^2\trace( M_h^{-T} M_h^{-1})=\delta^2\sum_{k=1}^N \frac{1}{|f\left(\frac{2\pi k}{N}\right)|^2}.
\label{eq:pz}
\end{align}
Using standard arguments,  it can be shown that \cite{gray}
\begin{align}
\overline{P}_h \to  \frac{\delta^2}{2\pi}\int_0^{2\pi}\frac{1}{|f(\lambda)|^2}\d \lambda.
\label{eq:pz_inf}
\end{align}
Observe that $\overline{P}_h$ is the energy of the inverse of the channel
scaled by $\delta^2$, and is related to the power needed for zero-forcing.  

In summary, for diagonally-dominant channels, the approximate ISI
capacity is given by
\begin{align*}
C_{\delta}(P) = \begin{cases}
1&\text{if } P\ge \overline{P}_h,\\
 \beta P+\ln(Z)&  \text{if }  \underline{P}_h\le P\le \overline{P}_h,\\
0&\text{if } P< \underline{P}_h,
\end{cases}
 \end{align*}
 where $\underline{P}_h=\lim_{N\to\infty}\EN_{\min}/N$ (assuming limit
 exists) and $Z$ is the normalizing constant for the Gibbs distribution. 

\section{Achievable Schemes}
\label{sec:achievable-schemes}
We now consider achievable schemes for the approximate ISI channel
under that assumption that the channel matrix $M_h$ is invertible. In
achievable schemes, an information sequence $B\in\{-1,1\}^N$ with a
well-chosen distribution is
encoded into a channel input $x$ that satisfies $|y_n|\ge\delta$. The
rate of transmission over the approximate ISI channel is $H(B)/N$.
\subsection{Zero-forcing with Gibbs distribution}
 Let $b=\{b_1, b_2,\hdots, b_N\} \in \{-1,1\}^N$ be an instance of the
 information sequence $B$. Choose the input to the channel as 
\begin{align}
x = \delta M_h^{-1} b,
\label{eq:opt}
\end{align}
which implies that the output of the ISI channel is $y=M_h x = \delta
b$. This is referred to as zero-forcing because it
involves channel inversion. Hence, the output of the quantizer $s$ equals the information sequence $b$. 
When the channel is  diagonally dominant, Lemma \ref{lem:one} and
Section \ref{sec:gibbs} imply that a Gibbs distribution on $B$ and the
choice $x =\delta M_h^{-1} s$  as the input to the channel results in
a capacity-achieving scheme. Hence, for diagonally-dominant $h$, the scheme in \eqref{eq:opt} is optimal when $b$ is sampled from the  Gibbs distribution given in \eqref{eq:gibbs}.  

 However, $b$ is a sequence of length $N$, and it is well known that sampling from a Gibbs distribution has exponential complexity in $N$.  For achieving capacity,  $N$ should be very large which makes this scheme impractical. In the next subsection, we compute the entropy rate when $b$ is sampled from a Markov chain instead of a Gibbs distribution. 
\subsection{Zero-forcing with Markov input}
\label{sec:MS}
 As before we choose $x = \delta M_h^{-1} b$, where $b$ is the
 information sequence. The sequence  $b$ is sampled from a two state
 Markov chain shown in Fig. \ref{fig:MC} with transition matrix 
\begin{eqnarray}
	 P = \left( \begin{array}{ccc}
	 \alpha  & 1-\alpha   \\
     1-\alpha & \alpha  \end{array} \right),
     \label{eq:mark}
\end{eqnarray}  
where $0\leq\alpha\leq 1$.
Observe that  $s_n = b_n$ and the achievable rate of this scheme
equals the entropy  $H(B)=H_2(\alpha)$, where $H_2(x)$  is the binary
entropy function. The average transmit power, denoted
$P_{zm}(\alpha)$, is given by 
\begin{align*}
P_{zm}(\alpha) &= \frac{1}{N}\E[\|X\|^2] = \frac{\delta^2}{N}\E[B^TM_h^{-T} M_h^{-1}B]\\
&= \frac{\delta^2}{N} \trace( M_h^{-T} \E[BB^T]M_h^{-1}) .
\end{align*}
\begin{figure}
\centering 
\begin{tikzpicture}[->,>=stealth',shorten >=1pt,auto,node distance=4cm,
 			     	thick,main node/.style={circle,draw,
			     	font=\sffamily,minimum size=10mm}]
  					\node[main node] (M) {1};
  					\node[main node] (F) [right of = M] {-1};
  					\path[every node/.style={font=\sffamily\large,
  					fill=white,inner sep=2pt}]
  					(M) edge [loop left] node[left=1mm]{$\alpha$} (M) 
      				edge [bend left=60] node[yshift=1mm] {$1-\alpha$} (F)     
  					(F) edge [loop right] node[right=1mm]{$\alpha$} (F) 
      				edge [bend right=-60] node[yshift=-1mm] {$1-\alpha$} (M);
			\end{tikzpicture} 
\caption{Markov chain with transition matrix given in \eqref{eq:mark} }
\label{fig:MC}
\end{figure}
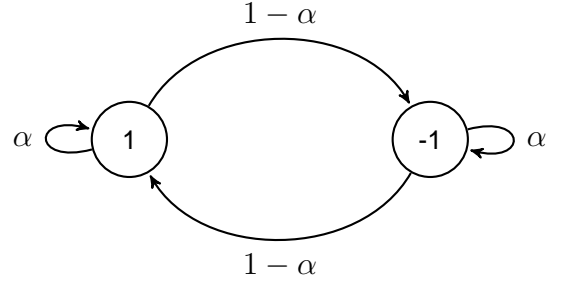
Using the eigenvalue decomposition of $P$, it can be shown that 
 \[
	 P^{d} = \left( \begin{array}{ccc}
	 \frac{1 + (2\alpha -1)^{d}}{2}  & \frac{1 - (2\alpha -1)^{d}}{2} \\
     \frac{1 - (2\alpha -1)^{d}}{2}  & \frac{1 + (2\alpha -1)^{d}}{2} \end{array}\right).
\]
 Using $(0.5, 0.5)$  as stationary distribution and $P^d$,  we get
\begin{equation}
\mathbb{E}[b_n b_{n+d}] = \beta^d,
\end{equation}
where $\beta =2\alpha -1$. 
Hence the correlation matrix $R=\E[BB^T]$ is given by $R_{ij} = \beta^{|j-i|}$. So for finite $N$
\begin{align*}
P_{zm}(\alpha) &= \frac{\delta^2}{N} \trace(  R M_h^{-1}M_h^{-T}).
\end{align*}
When $N$ is large, the  Topelitz matrix $R$ can be approximated by a circulant matrix \cite{gray}  and 
\begin{align}
P_{zm}(\alpha) &\to  \frac{\delta^2}{2\pi}\int_0^{2\pi}\frac{1}{|f(\lambda)|^2}\left(\frac{2(1-\beta\cos(\lambda))}{1+\beta^2-2\beta\cos(\lambda)} -1\right)\d \lambda.
\label{eq:pzm}
\end{align}
The value of $\alpha$ is chosen so as to maximize the entropy
rate and the rate achieved at power $P$, denoted $\mathcal{R}_m(P)$,
is obtained as  
\begin{align}
\mathcal{R}_m(P) = \max_{\alpha: P_{zm}(\alpha)\leq P} H_2(\alpha).
 \end{align}
\section{Numerical examples}
\label{sec:numerical-examples}
In this section, we evaluate the approximate ISI capacity and the rate
achieved by the Markov scheme in Section \ref{sec:MS} for some sample
channels. For numerical evaluation, we assume $\delta =0.3$. 
 
The channel $(1,\epsilon), |\epsilon|<1$ is a diagonally-dominant
channel and zero forcing with Gibbs distribution is an optimal strategy. For this channel, 
 \[f(\lambda) = 1+\epsilon e^{j\lambda}.\] 
Using \eqref{eq:pz_inf}, the minimum power required for zero-forcing is given by
\[\overline{P}_h= \frac{\delta^2}{2\pi}\int_0^{2\pi} \frac{1}{ 1+\epsilon^2+2\epsilon \cos(\lambda)}\d \lambda = \frac{\delta^2}{1-\epsilon^2}.\]
While difficult to prove theoretically, by careful simulations, it can
be observed that the minimum energy $\EN_{\min}$  is obtained for the
sequences $\pm(1,1,\hdots,1)$. Using this observation,
$\underline{P}_h=\lim_{N\to\infty}\EN_{\min}/N$ is obtained as
\[\underline{P}_h =\frac{\delta^2}{(1+\epsilon)^2}.\]
The transmit power required for the two state Markov scheme \eqref{eq:pzm} is 
 \[P_{zm}(\alpha) =  \frac{\delta ^2}{1-\epsilon ^2} \left[\frac{1+ \epsilon (1-2\alpha)}{1- \epsilon (1-2\alpha)}\right].\]
 Hence the  maximum entropy problem for the two state Markov chain translates to 
 \begin{align*}
&\mathcal{R}_m(P)=\max H_2(\alpha),\\
\text{such that}\quad  &\alpha > \frac{1}{2} + \frac{1}{2 \epsilon} \frac{1 - P\delta^{-2}(1-\epsilon ^2)}{1 + P\delta^{-2}(1-\epsilon ^2)}.
 \end{align*}
The solution of the above problem is given by 
\begin{align*}
\mathcal{R}_m(P) = 
\begin{cases}
1&\text{if }   P\geq \overline{P}_h\\
 H_2\left( \frac{1}{2} + \frac{1}{2 \epsilon} \frac{1 - P\delta^{-2}(1-\epsilon ^2)}{1 + P\delta^{-2}(1-\epsilon ^2)}\right)  &\text{if }\frac{\delta^2}{(1+\epsilon)^2}\leq P<\overline{P}_h\\
0  &\text{if } P<\frac{\delta^2}{(1+\epsilon)^2}.
\end{cases}
 \end{align*}
 We first observe that  the  capacity $C_{\delta}(P)$ and the achievable rate $\mathcal{R}_m(P) $ match at $\underline{P}_h$ and $\overline{P}_h$ being equal to $0$ and $1$, respectively. 
\begin{figure}
\centering
\begin{tikzpicture}
\begin{axis}[
	xlabel={Normalised energy: P/$\delta ^2$},
	ylabel={bits per channel use},
	line width = 1pt, 
	width=0.4\paperwidth, height=0.26\paperheight,xmax=1.05,xmin=0.65,ymin=0,ymax=1,xmajorgrids,ymajorgrids,legend pos= south east]
	
	\addplot[smooth,color=blue,mark=square] plot coordinates
	{
   ( 0.6944,    0.0834)
   ( 0.7178,    0.2686)
   ( 0.7411,    0.4073)
   ( 0.7644,    0.5231)
   ( 0.7867,    0.6205)
   ( 0.8100,    0.7019)
   ( 0.8333,    0.7695)
   ( 0.8567,    0.8255)
   ( 0.8800,    0.8715)
   ( 0.9033,    0.9090)
   ( 0.9256,    0.9389)
   ( 0.9489,    0.9621)
   ( 0.9722,    0.9793)
   ( 0.9956,    0.9910)
   ( 1.0189,    0.9978)
   ( 1.0422,    0.9999)

	};
	\addplot[smooth,color=blue,mark=star] plot coordinates
	{
    (0.6944,         0)
    (0.7178,    0.2399)
    (0.7411,    0.3949)
    (0.7644,    0.5159)
    (0.7867,    0.6146)
    (0.8100,    0.6962)
    (0.8333,    0.7642)
    (0.8567,    0.8207)
    (0.8800,    0.8676)
    (0.9033,    0.9059)
    (0.9256,    0.9366)
    (0.9489,    0.9607)
    (0.9722,    0.9784)
    (0.9956,    0.9907)
    (1.0189,    0.9978)
    (1.0422,    0.9999)
};	
    \addplot[dashed,color=red] plot coordinates
    { 
    (0.6944,    0.0834)
    (0.7178,    0.5627)
    (0.7411,    0.7413)
    (0.7644,    0.8311)
    (0.7867,    0.8870)
    (0.8100,    0.9230)
    (0.8333,    0.9471)
    (0.8567,    0.9636)
    (0.8800,    0.9753)
    (0.9033,    0.9836)
    (0.9256,    0.9895)
    (0.9489,    0.9939)
    (0.9722,    0.9968)
    (0.9956,    0.9986)
    (1.0189,    0.9996)
    (1.0422,    0.9999)
};
    \addplot[dashed,color=red,mark=star] plot coordinates
    {   
    (0.6944,         0)
    (0.7178,    0.3503)
    (0.7411,    0.5420)
    (0.7644,    0.6732)
    (0.7867,    0.7675)
    (0.8100,    0.8369)
    (0.8333,    0.8881)
    (0.8567,    0.9259)
    (0.8800,    0.9535)
    (0.9033,    0.9731)
    (0.9256,    0.9864)
    (0.9489,    0.9946)
    (0.9722,    0.9989)
    (0.9956,    0.9999)
    (1.0189,    0.9983)
    (1.0422,    0.9947)
};

\legend{$C(P)$$\epsilon = 0.2$, $ \mathcal{R}_m(P)$ $\epsilon = 0.2$, $C(P)$ $\epsilon = 0.8$,$ \mathcal{R}_m(P)$ $\epsilon = 0.8$ }
\title{$-\epsilon$,$1$}
\end{axis}
\end{tikzpicture}
\caption{$\mathcal{R}_m(P)$ and $C(P)$ versus normalised energy $P/\delta ^2$ for $ \{1, \epsilon \}$  channel}
\label{fig:cap1}
\end{figure}
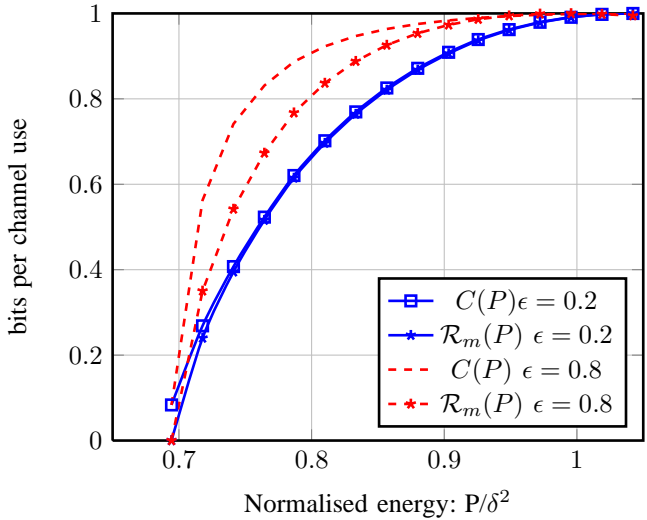
In Fig. \ref{fig:cap1}, the approximate capacity and the achievable rate of the Markov scheme are plotted as a function of normalized power $P/\delta^2$ for $\epsilon=0.2$ and $0.8$.  We observe that $\mathcal{R}_m(P) $ is very close to capacity for $\epsilon=0.2$ and the gap increases with $\epsilon$.  
 
In Fig. \ref{fig:cap2}, $\mathcal{R}_m(P) $ and $C_{\delta}(P)$ are plotted for a non-diagonally dominant channel $h=(-0.3, 1, 0.6)$. The energies $\EN(s)$  are obtained  by numerically solving \eqref{eq:conv_ineq}. For this channel $\overline{P}_h/\delta^2 \approx 0.838$ and $\underline{P}_h/\delta^2 \approx 0.56$ and $\mathcal{R}_m(P) =0$ for $P/\delta^2 < 0.59$. 
\begin{figure}
\centering
\begin{tikzpicture}
\begin{axis}[
	xlabel={Normalised energy: P/$\delta ^2$},
	ylabel={bits per channel use},
	line width = 1pt, 
	width=0.4\paperwidth, height=0.26\paperheight,xmax=0.85,xmin=0.55,ymin=0,ymax=1,xmajorgrids,ymajorgrids,legend pos= south east]
	
	\addplot[smooth,color=red,mark=square] plot coordinates
	{
    (0.5644,         0)
    (0.5744,    0.5386)
    (0.5856,    0.6456)
    (0.5967,    0.7177)
    (0.6078,    0.7700)
    (0.6189,    0.8098)
    (0.6300,    0.8414)
    (0.6400,    0.8672)
    (0.6511,    0.8887)
    (0.6622,    0.9069)
    (0.6733,    0.9223)
    (0.6844,    0.9354)
    (0.6956,    0.9468)
    (0.7056,    0.9565)
    (0.7167,    0.9649)
    (0.7278,    0.9721)
    (0.7389,    0.9781)
    (0.7500,    0.9833)
    (0.7611,    0.9877)
    (0.7711,    0.9911)
    (0.7822,    0.9940)
    (0.7933,    0.9963)
    (0.8044,    0.9979)
    (0.8156,    0.9991)
    (0.8267,    0.9998)
    (0.8367,    0.9999)

	};
	\addplot[smooth,color=blue,mark=star] plot coordinates
	{
    (0.5922,    0.0003)
    (0.6022,    0.2735)
    (0.6133,    0.4510)
    (0.6244,    0.5686)
    (0.6356,    0.6531)
    (0.6467,    0.7177)
    (0.6578,    0.7690)
    (0.6678,    0.8105)
    (0.6789,    0.8448)
    (0.6900,    0.8734)
    (0.7011,    0.8974)
    (0.7122,    0.9176)
    (0.7233,    0.9346)
    (0.7333,    0.9489)
    (0.7444,    0.9608)
    (0.7556,    0.9706)
    (0.7667,    0.9789)
    (0.7778,    0.9854)
    (0.7889,    0.9906)
    (0.7989,    0.9944)
    (0.8100,    0.9972)
    (0.8211,    0.9991)
    (0.8322,    0.9999)
};	
\legend{$C(P)$, $ \mathcal{R}_m(P)$}
\title{$-\epsilon$,$1$}
\end{axis}
\end{tikzpicture}
\caption{$\mathcal{R}_m(P)$ and $C(P)$ versus normalised energy $P/\delta ^2$ for $\{-\epsilon, 1, 2\epsilon \}$ channel with $\epsilon$ = 0.3}
\label{fig:cap2}
\end{figure}
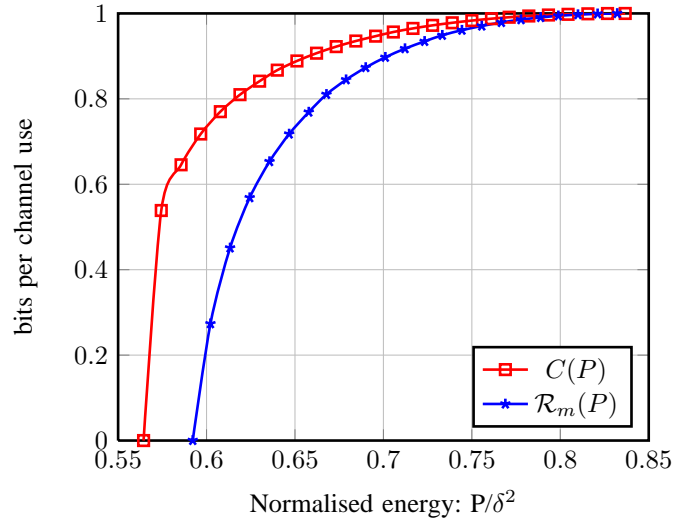
We observe that at higher powers the Markov scheme is close to
capacity even for this non-diagonally dominant example. 

\section{Concluding Remarks}
\label{sec:concluding-remarks}
The capacity of the approximate output-quanitzed ISI channel is characterized using
Gibbs distribution, and Markov schemes are shown to approach
capacity. The characterization is complete for the case of
diagonally-dominant channels. Extensions to more general channels and
better achievable schemes are interesting problems for future
study. Another important problem is bounding the error in the
information rate because of the approximation, which is complicated by
the dependencies introduced by the Gibbs distribution. 

\bibliographystyle{IEEEtran} 
\bibliography{IEEEabrv,isirefs} 

\end{document}